\newtheorem{definition}{Definition}[section]
\newcommand{\beqn}{\begin{equation}}
\newcommand{\eeqn}{\end{equation}}
\newtheorem{theorem}{Theorem}
\def\reals{{{\rm l} \kern-.15em {\rm R}}}
\newcommand{\xymod}{|\xb-\yb|}
\newcommand{\xb}{{\bf x}}
\newcommand{\yb}{{\bf y}}
\newcommand{\uvec}{{\bf {u}}}
\newcommand{\nub}{\boldsymbol{\nu}}
\newcommand{\kappab}{\boldsymbol{\kappa}}
\newcommand{\sumj}{\sum_{j=1}^3}
\newcommand{\sumk}{\sum_{k=1}^3}
\newcommand{\xx}{\textbf{x}}             
\newcommand{\yy}{\textbf{y}}
\title{Simple and efficient representations for the fundamental solutions 
	of Stokes flow in a half-space}
\author{Z. Gimbutas
	\thanks{%
		Information Technology Laboratory,
		National Institute of Standards and Technology,
		325 Broadway, Mail Stop 891.01,
		Boulder, CO  80305-3328.
		{{\em email}: {\sf {zydrunas.gimbutas@nist.gov}}}. The work of this author was 
		supported in part by the 
		Office of the Assistant Secretary of Defense for Research and Engineering 
		and AFOSR under NSSEFF Program Award FA9550-10-1-0180.
		and in part by the National Science Foundation under
		grant DMS-0934733. 
		Contributions by staff of NIST, an agency of the U.S. Government, 
		are not subject to copyright within the United States.}
	\and L. Greengard
	\thanks{Simons Center for Data Analysis, Simons Foundation,
		160 Fifth Avenue, NY, NY 10010 and Courant Institute of Mathematical Sciences,
		New York University, 251 Mercer Street, New York, NY 10012-1110.
		{{\em email}: {\sf {greengard@cims.nyu.edu}}}. The work of this author was 
		supported in part by the 
		Office of the Assistant Secretary of Defense for Research and Engineering 
		and AFOSR under NSSEFF Program Award FA9550-10-1-0180,
		by the National Science Foundation under
		grant DMS-0934733, and 
		by the Applied Mathematical Sciences Program of the U.S. Department of Energy
		under Contract DEFGO288ER25053.}
	\and S. Veerapaneni
	\thanks{Department of Mathematics, University of Michigan, 530 Church Street, Ann Arbor, MI 48109.
		{{\em email}: {\sf {shravan@umich.edu}}}.
		The work of this author was supported by the NSF under grants DMS-1418964 and DMS-1224656.}
}
\begin{document}

\maketitle

\begin{abstract}
  We derive new formulas for the fundamental solutions of slow,
  viscous flow, governed by the Stokes equations, in a
  half-space. They are simpler than the classical representations
  obtained by Blake and collaborators, and can be efficiently
  implemented using existing fast solvers libraries. We show, for
  example, that the velocity field induced by a Stokeslet can be
  annihilated on the boundary (to establish a zero slip condition)
  using a single reflected Stokeslet combined with a single
  Papkovich-Neuber potential that involves only a scalar harmonic
  function. The new representation has a physically intuitive
  interpretation.

\end{abstract}

\section{Introduction}
Viscous flow of passive and active suspensions in the presence of an
infinite planar boundary is an important physical model in many areas
of science and engineering. It serves as a useful paradigm for
understanding the effect of confined geometries on the macroscopic
flow behavior of particulate flows, for example, that of bacterial
propulsion, cellular blood flow and colloidal suspensions,
\cite{KK,CJ,BB,ves3d,SL}.  In problems where the Reynolds number is
low, the ambient fluid is governed by the Stokes equations:
\begin{equation}
\mu\, \Delta \uvec(\xx) = \nabla p(\xx), 
\quad \nabla \cdot \uvec(\xx) = 0, \label{Stokes}
\end{equation}
where $\mu$ is the fluid viscosity, $\uvec(\xx) = (u_1(\xx), u_2(\xx),
u_3(\xx))$ is the velocity of the fluid, and $p(\xx)$ is the
pressure. Assuming the plane wall is located at $x_3 = 0$
and that the flow velocity decays in the far field, 
\[
\uvec(\xx)  \rightarrow 0 ,
\quad\text{as}\quad |\xx| \rightarrow \infty ,
\]
the no-slip boundary condition is
\begin{equation}
\uvec\big\vert_{x_3=0} = 0  .
\label{BCs}
\end{equation}

Boundary integral methods are particularly well-suited for
problems of this kind since they discretize the domain boundary alone
(resulting in many fewer degrees of freedom) and impose 
the decay and the no-slip conditions exactly.
Moreover, they avoid the need for artificial truncation
of the computational domain and can be solved rapidly and with 
great accuracy using fast algorithms such as the fast multipole
method (FMM) and high order accurate quadrature rules.

In order to reformulate the Stokes equations as a 
boundary integral equation, however, one needs to have
access to the Green's functions for the
half-space \cite{Lamb,HB}. A now classical approach to 
constructing this Green's function
is due to Blake and others
\cite{Blake1971, Blake1974a, Blake1974b,  AB2, AB1, Pozrikidis, Yu}. 
Unfortunately, the resulting formulas are rather
complicated, making them 
somewhat difficult to implement.

Here, we show that a much simpler alternative to the Blake solution
can be obtained by combining a free space image, which 
annihilates the {\em tangential} components of velocity, 
with a Papkovich-Neuber potential \cite{PN1,PN2}
which annihilates the normal component. 

The present paper is organized as follows. 
We discuss Papkovich-Neuber potentials and
the standard fundamental solutions for the Stokes
equations in section \ref{sc:prelim}.
(See, for example, \cite{Lamb, HB, KK, Pozrikidis}).
We also review Blake's formula for the Stokeslet in a half-space.
In section \ref{sc:pap}, we derive 
the image structures
for Stokeslets, stresslets, rotlets, and Stokes doublets. 

In the appendices, we provide the analogous formulas
for two-dimensional half-space Stokes kernels and for 
some problems of linear elasticity.

\section{Fundamental solutions, the Papkovich-Neuber representation,
and Blake's formulas}
\label{sc:prelim}

Before discussing the various standard fundamental solutions for the Stokes
equations in free-space, we introduce the Papkovich-Neuber 
representation, originally developed in \cite{PN1,PN2} for
problems of linear elasticity. Without loss of generality, we assume that the fluid viscosity $\mu = 1$ in the rest of the paper. 

\begin{definition}
Let $\phi(\xx), \xx \in \reals^3$ be a harmonic function. Then
the induced Papkovich-Neuber representation is defined to be the
paired vector field $\uvec$ and scalar field $p$ given by:
\begin{equation} \uvec(\xx) = x_3 \nabla \phi (\xx) - \left[ 
\begin{array}{c} 0 \\ 0 \\ \phi(\xx) \end{array} \right], 
\quad p(\xx) = 2 \frac{\partial \phi(\xx)}{\partial x_3}.
\label{PN}\end{equation}
\end{definition}

\noindent
It is straightforward to verify that $(\uvec,p)$
satisfy the Stokes equations \eqref{Stokes} (with $\mu = 1$). 

Suppose now that a force vector 
${\bf f} = (f_1, f_2, f_3)$ is applied to a viscous fluid
at a point $\yy$. Then, it is well-known that the induced velocity 
and pressure can be computed
using the Stokeslet (the single layer kernel):
\begin{equation} S_{ij}(\xx,\yy)= \frac{1}{8\pi} \left[ \frac{\delta_{ij}}{\xymod}+
 \frac{(x_i-y_i)(x_j-y_j)}{\xymod^3} \right],
\quad i,j=1,2,3,
\end{equation}
\begin{equation} 
P_{j}(\xx,\yy)= \frac{1}{4\pi} \frac{x_j-y_j}{\xymod^3},
\end{equation}
where $\delta_{ij}$ is the Kronecker
delta. More precisely, 
the velocity vector $\uvec(\xx)$ and pressure $p(\xx)$ are
given by 
\begin{equation}
u_i(\xx) = \sumj S_{ij}(\xx,\yy) f_j, \quad p(\xx) = \sumj P_{j}(\xx,\yy) f_j. 
\label{stokeslet}
\end{equation}

The stresslet (or double layer kernel) for the Stokes equations describes
the velocity induced by an infinitesimal displacement 
${\bf g} = (g_1, g_2, g_3) $ (sometimes called the
{\em double force} source strength) at a point $\yy$
with orientation vector $\nub = (\nu_1, \nu_2, \nu_3) $.
It is given by:
\begin{equation} T_{ijk} (\xx,\yy)= \frac{3}{4\pi} \frac{(x_i-y_i)(x_j-y_j)(x_k-y_k)}{|\xb-\yb|^5}, \quad i,j,k=1,2,3,
\end{equation}
\begin{equation} \Pi_{jk}(\xx,\yy)= \frac{1}{2\pi} \left[- \frac{\delta_{jk}}{\xymod^3} 
+ \frac{3(x_j-y_j)(x_k-y_k)}{\xymod^5} \right], 
\end{equation}
and the corresponding formulas for the velocity and
pressure at an arbitrary point $\xx$ are
\begin{equation}
u_i(\xx) = \sumj \sumk T_{ijk}(\xx,\yy) \nu_k g_j, \quad p(\xx) = 
\sumj \sumk \Pi_{jk}(\xx,\yy) \nu_k g_j.
\end{equation}

\subsection{Stokes flow in a half-space, Blake's formula}
Suppose now that a force vector ${\bf f}$ is applied to a viscous
fluid in the upper half-space ($x_3>0$). Then the corresponding 
Stokeslet-induced velocity field fails to satisfy the no-slip
condition (\ref{BCs}).
In order to annihilate the velocity field while satisfying the homogeneous
Stokes equations in the upper half-space, \cite{Blake1971} proposed 
the following image structure: 
\begin{equation}
S^{W}_{ij}(\xb,\yb) = S_{ij}(\xb,\yb) - S_{ij}(\xb,\yb^{I}) 
+ 2y_3^2 S_{ij}^{D}(\xb,\yb^{I}) - 2y_3 S_{ij}^{SD}(\xb,\yb^{I}), 
\label{blake1}
\end{equation}
where $\yy^{I}=(y_1,y_2,-y_3)$ is the reflected image location.
Here, $S^{D}$ is a {\em modified source doublet} given by 
\begin{equation}
S_{ij}^{D}(\xb,\yb) = \frac{1}{8\pi} (1-2\delta_{j3}) 
 \frac{\partial}{\partial x_j} 
\frac{x_i-y_i}{|\xb-\yb|^3},
\end{equation}
and $S^{SD}$ is a {\em modified Stokes doublet}, given by 
\begin{equation}
S_{ij}^{SD}(\xb,\yb) = (1-2\delta_{j3})
 \frac{\partial S_{i3}(\xb,\yb)}{\partial x_j}.
\end{equation}
Similar, but more involved, decompositions for 
the Stokes doublet and stresslet in a
half-space are given in \cite{Blake1974b} and
\cite{Pozrikidis}. 
Note that the computation of the modified source doublet $S_{ij}^D$
requires the evaluation of three distinct harmonic dipole fields.

\section{A new image formula}
\label{sc:pap}

In this section, we derive a simpler image structure, using the 
Papkovich-Neuber representation which involves only a single harmonic
function. $\yy^I$, as above, will denote the image location $(y_1,y_2,-y_3)$.
Now, however, we define reflected 
single force, double force,
and double force orientation vectors by negating their third components:
\begin{equation}
  {\bf f}^{I} = (f_1,f_2,-f_3), 
  \quad {\bf g}^{I} = (g_1,g_2,-g_3), 
  \quad \nub^{I} = (\nu_1,\nu_2,-\nu_3).
\end{equation}
We also make use of the harmonic potential due to 
a unit strength charge,
\begin{equation} G^{S} (\xx,\yy) = 
\frac{1}{4\pi |\xx-\yy|},
\end{equation}
the harmonic potential due to a unit strength dipole with 
orientation vector $\nub$,
\begin{equation} G^{D}[\nub] (\xx,\yy) = 
\sum_{i=1}^3 \nu_i \frac{\partial}{\partial y_i} \frac{1}{4\pi |\xx-\yy|},
\end{equation}
and the harmonic potential due to a unit strength quadrupole with
orientation vectors $\nub$ and $\kappab$,
\begin{equation} G^{Q}[\nub,\kappab] (\xx,\yy) = 
\sum_{i=1}^3 \sum_{j=1}^3 \nu_i \kappa_j \frac{\partial^2}{\partial y_i\partial 
y_j} \frac{1}{4\pi |\xx-\yy|}.
\end{equation}
Finally, recall that,  
through the Papkovich-Neuber representation (\ref{PN}),
the harmonic function $\phi(\xx)$ induces 
velocity and pressure fields that can be written in component
form as
\begin{equation}
 u_i(\xb) = 
  x_3 \frac{\partial}{\partial x_i} \phi(\xx) - \delta_{i3} \phi(\xx),
 \quad
 p(\xx) = 2 \frac{\partial}{\partial x_3} \phi(\xx) \, .
\end{equation}
Of particular note is the fact that at $x_3=0$, the only non-zero
velocity component is 
\begin{equation}
 u_3(\xx) = -\phi(\xx).
\end{equation}

This suggests a simple two-step strategy. First, annihilate the
tangential components of the velocity field induced by a Stokeslet
(single layer) or stresslet (double layer) kernel. It is easy to see
that this can be accomplished by subtracting the influence of 
a reflected single force ${\bf f}^{I} =
(f_1,f_2,-f_3)$, or double force ${\bf g}^{I} = (g_1,g_2,-g_3),
\nub^{I} = (\nu_1,\nu_2,-\nu_3)$ located at the image point $\yy^I$,
respectively.
It remains only to match the 
remaining non-zero normal component $u_3$, which we will do by a
judicious choice of the Papkovich-Neuber potential $\phi(\xx)$.

\subsection{The Stokeslet correction}

Following the discussion above, let us write out in more detail the
velocity field ${\bf v}$ remaining after subtracting the image
Stokeslet at $\yy^I$ from the original Stokeslet at $\yy$ (Fig.  1):
\begin{equation}
    v_i(\xx) = \sum_{j=1}^3 S_{ij} (\xx,\yy) f_j -
    \sum_{j=1}^3 S_{ij} (\xx,\yy^{I}) f^{I}_j.
\end{equation}
At the interface $x_3=0$, a straightforward computation yields
\begin{equation}
 v_1(\xx)=0, \quad v_2(\xx)=0, 
 \quad v_3(\xx)= -\frac{{f^{I}_3}}{4\pi} 
{\frac{1}{|\xb-\yb^{I}|} } 
-\frac{
   {y_3} }{4\pi} {\sum_{j=1}^3 \frac{x_j-y^{I}_j}{|\xb-\yb^{I}|^3}
   f^{I}_j}.
\label{vcomp}
\end{equation}

\begin{figure}
\begin{center}
\includegraphics[width=4in]{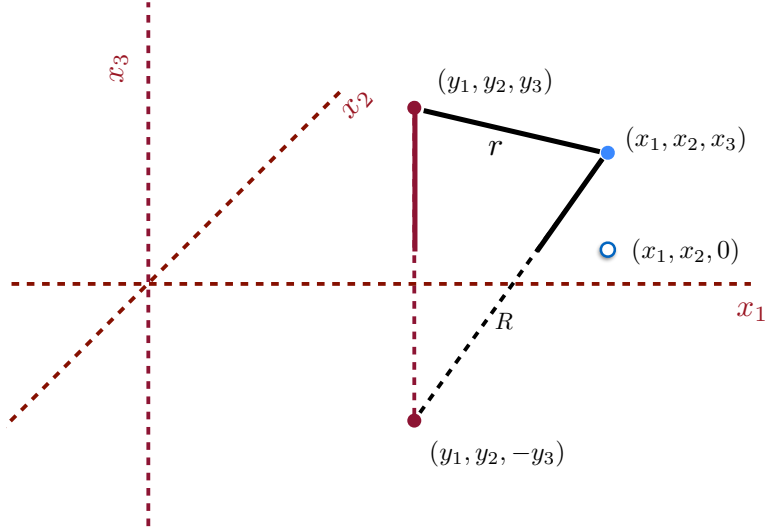}
\end{center}
\caption{The original source location is in the upper half-space
at $(y_1,y_2,y_3)$ and the reflected image source is at 
$(y_1,y_2,-y_3)$. For target points $(x_1,x_2,0)$ that lie
on the half-space boundary, the distances $r$ and $R$ from the 
source and image are the same, simplifying the computation of 
${\bf v}$ in (\ref{vcomp}).}
\end{figure}

By inspection of (\ref{vcomp}), it is clear that 
the harmonic potential $\phi$ required to cancel the non-zero normal
component $v_3$ at the interface is that induced by a
charge of strength $f_3^I$ and a dipole with orientation vector
${\bf f}^I$ of strength $y_3$ located at ${\bf y}^I$. 
Thus, the 
velocity field $\uvec^W$ satisfying the desired no-slip boundary
condition (\ref{BCs}) can be expressed as:
\begin{equation}
 u_i^{W}(\xx) =  u_i^{A}(\xx) -  u_i^{B}(\xx) -  u_i^{C}(\xx), 
\end{equation}
where $\uvec^A$ and $\uvec^B$ are the velocity fields induced by the 
original free-space Stokeslet and the reflected image force vector,
respectively:
\begin{equation}
 u_i^{A}(\xx) = \sum_{j=1}^3 S_{ij} (\xx,\yy) f_j, \qquad
 u_i^{B}(\xx) = \sum_{j=1}^3 S_{ij} (\xx,\yy^{I}) f^{I}_j.
\end{equation}
Here, $u_i^C$ is the Papkovich-Neuber correction
\begin{equation}
 u_i^{C}(\xx) = x_3 \frac{\partial}{\partial x_i} \phi(\xx) - \delta_{i3} 
\phi(\xx),
\end{equation}
where the harmonic potential $\phi$ is that due to a simple charge and dipole,
both located at $\yy^I$:
\begin{equation}
\phi(\xx) = f^{I}_3 \, G^{S}(\xx,\yy^{I}) + 
y_3 \, G^{D}[{\bf f}^{I}](\xx,\yy^{I}).
\end{equation}

\subsection{The stresslet correction}

We state the image structure for the stresslet (the double layer kernel)
in the form of a theorem.

\begin{theorem}
Let a double force ${\bf g}$ with orientation vector $\nub$ be located
at $\yy$, resulting in the free-space velocity field
\begin{equation}
 u^{T}_i(\xx) = \sumj \sumk T_{ijk} (\xx,\yy) \nu_k g_j \, .
\end{equation}
Then the corresponding velocity 
satisfying the no-slip boundary condition (\ref{BCs}) is given by 
\begin{equation}
 u^{T,W}_i(\xx) = 
\sumj \sumk T_{ijk} (\xx,\yy) \nu_k g_j - 
\sumj \sumk T_{ijk} (\xx,\yy^{I}) \nu^{I}_k g^{I}_j - 
\left[x_3 \frac{\partial}{\partial x_i} \phi^T(\xx) - 
\delta_{i3} \phi^T(\xx)\right] , 
\end{equation}
where
\begin{equation}
\phi^T(\xx) = 2 (\nub^{I} \cdot {\bf g}^{I})  G^{D}[{\bf h}](\xx,\yy^{I}) + 
2 y_3 G^{Q}[\nub^{I}, {\bf g}^{I}](\xx,\yy^{I}),
\end{equation}
with ${\bf h}=(0,0,1)$.
\end{theorem}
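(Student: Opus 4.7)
The approach parallels the Stokeslet argument of Section~\ref{sc:pap}. The Papkovich-Neuber field induced by $\phi^T$ has, on the interface $x_3=0$, the single non-zero component $u_3^{PN}=-\phi^T$, so the plan breaks into two steps: (i) confirm that $u^{T,W}$ solves the Stokes equations in the upper half-space and decays at infinity; (ii) check the no-slip condition on $x_3=0$ componentwise. Step (i) is immediate: each free-space stresslet is a Stokes solution away from its source, while the Papkovich-Neuber field is a Stokes solution by~(\ref{PN}) provided $\phi^T$ is harmonic in the upper half-space, and this holds because both $G^{D}$ and $G^{Q}$ in the definition of $\phi^T$ are centered at $\yy^I\in\{x_3<0\}$. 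Uniqueness of the exterior Stokes problem with no-slip and decay then promotes the boundary match to the full identity.

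The substance lies in step (ii), and it rests on a simple parity observation. At any point with $x_3=0$ one has $|\xx-\yy^I|=|\xx-\yy|$, and for any three-vector $\mathbf{a}$ the identity $(x_j-y_j^I)a_j^I=(x_j-y_j)a_j$ holds for every $j$ (two sign flips cancel when $j=3$; the case $j=1,2$ is trivial). Consequently all inner products built from $\xx-\yy^I$ against reflected vectors are invariant under reflection. Applied to the triple product in the image stresslet, only the outer index $i$ can distinguish image from original, and it changes sign exactly when $i=3$. Hence the tangential components of $T_{ijk}(\xx,\yy)\nu_kg_j-T_{ijk}(\xx,\yy^I)\nu_k^I g_j^I$ vanish identically on the boundary, while the normal component doubles, leaving a residual
\begin{equation*}
v_3\big|_{x_3=0}\;=\;-\frac{6y_3}{4\pi\,|\xx-\yy|^5}\Bigl(\sum_{j=1}^{3} (x_j-y_j)g_j\Bigr)\Bigl(\sum_{k=1}^{3} (x_k-y_k)\nu_k\Bigr).
\end{equation*}

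It remains to match this residual to $-\phi^T$ on the boundary. I would expand $G^{Q}[\nub^I,\mathbf{g}^I]$ using $\partial^2_{y_iy_j}(1/|\xx-\yy^I|)=3(x_i-y_i^I)(x_j-y_j^I)/|\xx-\yy^I|^5-\delta_{ij}/|\xx-\yy^I|^3$ and simplify at the interface via the parity identities above; this produces a ``triple-product'' piece of the right shape, plus an unwanted trace term proportional to $(\nub^I\cdot\mathbf{g}^I)/|\xx-\yy|^3$. The dipole $G^{D}[\mathbf{h}]$ with $\mathbf{h}=(0,0,1)$ contributes $y_3/(4\pi|\xx-\yy|^3)$ on the boundary, and the coefficient $2(\nub^I\cdot\mathbf{g}^I)$ in front of it is chosen precisely to kill the unwanted trace. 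The main obstacle is a clean bookkeeping check: simultaneously verifying this cancellation and confirming that the surviving triple-product piece carries exactly the prefactor $6y_3/(4\pi)$ required to match $v_3$. The specific choice $\mathbf{h}=(0,0,1)$ (rather than any other fixed direction) is forced by this trace-cancellation requirement. Once established, no-slip holds componentwise and the theorem follows.
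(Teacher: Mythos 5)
Your proposal is correct and follows essentially the same route as the paper: the reflected image annihilates the tangential components on $x_3=0$ by the parity argument, and the Papkovich--Neuber potential is chosen so that $-\phi^T$ matches the surviving normal residual. The only difference is organizational---the paper first decomposes $T_{ijk}$ into the symmetric Stokes doublet plus the trace term $\tfrac{1}{4\pi}\tfrac{x_i-y_i}{|\xx-\yy|^3}\delta_{jk}$ and matches these separately with the quadrupole and dipole parts of $\phi^T$, whereas you compute the full normal residual $-\tfrac{6y_3}{4\pi|\xx-\yy|^5}\bigl(\sum_j(x_j-y_j)g_j\bigr)\bigl(\sum_k(x_k-y_k)\nu_k\bigr)$ in one shot and let the $G^{D}[{\bf h}]$ term cancel the $\delta_{ij}$ trace arising from expanding $G^{Q}$; the bookkeeping you leave sketched does close with exactly the stated coefficients.
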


\begin{proof}
Note first that the 
symmetric part of the Stokes doublet is given by
\begin{equation}
   T^{S}_{ijk} (\xx,\yy) = \frac{1}{4\pi} \left[
    -\frac{(x_i-y_i)}{|\xb-\yb|^3} \delta_{jk} + 
      \frac{3 (x_i-y_i)(x_j-y_j)(x_k-y_k)}{|\xb-\yb|^5}
\right].
\end{equation}
Furthermore, the tangential components of the velocity field 
\begin{equation}
v_i(\xx) = \sumj \sumk T^{S}_{ijk} (\xx,\yy) \nu_k g_j - \sumj \sumk T^{S}_{ijk} (\xx,\yy^{I}) \nu^{I}_k g^{I}_j 
\end{equation}
are zero when $x_3=0$. 
Thus, the harmonic potential $\phi^{S}$ annihilating $v_3$ at
the interface $x_3=0$ is that due to a single quadrupole source 
located at ${\bf y}^I$ with orientation vectors $\nub^I, {\bf g}^I$ and strength $2 y_3$:
 \begin{equation}
 \phi^{S}(\xx) = 2 y_3 G^{Q}[\nub^{I}, {\bf g}^{I}](\xx,\yy^{I}). \label{pnsymm}
 \end{equation}
It is straightforward to check that
the stresslet is simply related to the symmetric part
of the Stokes doublet
\begin{equation}
   T_{ijk} (\xx,\yy) = \frac{3}{4\pi} 
      \frac{(x_i-y_i)(x_j-y_j)(x_k-y_k)}{|\xb-\yb|^5} 
=\frac{1}{4\pi} 
    \frac{x_i-y_i}{|\xb-\yb|^3} \delta_{jk} + T^{S}_{ijk} (\xx,\yy). 
    \label{double}
\end{equation}
Thus, it remains only to annihilate the 
velocity field induced 
by the first term on the right-hand side of (\ref{double})
and its reflected image, given by
\begin{equation}
v_i(\xx) = \frac{1}{4\pi} \sumj \sumk
 \frac{x_i-y_i}{|\xb-\yb|^3} \delta_{jk} \nu_k g_j - 
 \frac{1}{4\pi} \sumj \sumk \frac{x_i-y_i^I}{|\xb-\yb^I|^3} 
 \delta_{jk} \nu_k^I g_j^I \, .
\end{equation}
At $x_3=0$, the tangential components vanish and the normal
component is easily computed to be $2 (\nub^{I} \cdot {\bf g}^{I})
G^{D}[{\bf h}](\xx,\yy^{I})$, where ${\bf h} = (0,0,1)$.
The desired result follows. 
\end{proof}

\subsection{The rotlet correction}

Similar representations are easily derived for other fundamental
solutions, such as the {\em rotlet} - the antisymmetric part of the 
Stokes doublet:
\begin{equation}
   T^{R}_{ijk} (\xx,\yy) = \frac{1}{4\pi} \left[ 
 - \frac{(x_j-y_j)}{|\xb-\yb|^3} \delta_{ik} 
 + \frac{(x_k-y_k)}{|\xb-\yb|^3} \delta_{ij}
      \right].
\end{equation}

The corresponding velocity 
satisfying the no-slip boundary condition (\ref{BCs}) is given by 
\begin{equation}
 u^{R,W}_i(\xx) = 
\sumj \sumk T^R_{ijk} (\xx,\yy) \nu_k g_j - 
\sumj \sumk T^R_{ijk} (\xx,\yy^{I}) \nu^{I}_k g^{I}_j - 
\left[x_3 \frac{\partial}{\partial x_i} \phi^R(\xx) - 
\delta_{i3} \phi^R(\xx)\right] , 
\end{equation}
where the Papkovich-Neuber correction
$\phi^R$ is due to two dipoles:
\begin{equation}
\phi^R(\xx) = - 2 \nu^{I}_3 G^{D}[{\bf g}^{I}](\xx,\yy^{I}) + 
2 g^{I}_3 G^{D}[\nub^{I}](\xx,\yy^{I}). \label{pnanti}
\end{equation}

\subsection{The Stokes doublet correction}

Finally, the Stokes doublet is the sum of its symmetric and
antisymmetric parts
\begin{align}
   T^{D}_{ijk} (\xx,\yy) &= \frac{1}{4\pi} \left[ 
  - \frac{(x_i-y_i)}{|\xb-\yb|^3} \delta_{jk}
 - \frac{(x_j-y_j)}{|\xb-\yb|^3} \delta_{ik} 
 + \frac{(x_k-y_k)}{|\xb-\yb|^3} \delta_{ij} \right. \nonumber \\ 
  &+  \left.  \frac{3 (x_i-y_i)(x_j-y_j)(x_k-y_k)}{|\xb-\yb|^5} \right] 
= T^{S}_{ijk} (\xx,\yy) + T^{R}_{ijk} (\xx,\yy). 
\end{align}
By combining (\ref{pnsymm}) and (\ref{pnanti}), we obtain the
image formula for the Stokes doublet
\begin{align}
 u^{D,W}_i(\xx) &= 
\sumj \sumk T^D_{ijk} (\xx,\yy) \nu_k g_j - 
\sumj \sumk T^D_{ijk} (\xx,\yy^{I}) \nu^{I}_k g^{I}_j \nonumber \\ 
&- \left[x_3 \frac{\partial}{\partial x_i} \phi^D(\xx) - 
\delta_{i3} \phi^D(\xx)\right] , 
\end{align}
where the Papkovich-Neuber potential
is given by 
\begin{equation}
\phi^D(\xx) = 
- 2 \nu^{I}_3 G^{D}[{\bf g}^{I}](\xx,\yy^{I}) 
+ 2 g^{I}_3 G^{D}[\nub^{I}](\xx,\yy^{I}) 
+ 2 y_3 G^{Q}[\nub^{I}, {\bf g}^{I}](\xx,\yy^{I}).
\end{equation}




\section{Conclusions}
\label{sc:conclusions}
We have derived very simple image formulas for Stokes flow
in a half-space induced by any of the standard fundamental
solutions - the Stokeslet, stresslet, rotlet, and Stokes doublet.
In each case, all that is required is a reflected fundamental solution
and a Papkovich-Neuber correction based on a single harmonic
potential. 

The velocity (and pressure) due to the ``direct"
and reflected fundamental solutions can be computed together with
any software that handles Stokeslets, stresslets, etc. in free space.
Furthermore, the Papkovich-Neuber potential requires only the evaluation of 
a single additional harmonic function --- itself requiring only software
for free space {\em harmonic} sources, dipoles and quadrupoles.
Many efficient schemes exist for these various steps, such as those 
described in 
\cite{rodin1, rodin2, nishimura1, zorin1, frangi, 
wang2, duraiswami1, wangwhite, wang1, TG, fongdarve, CMCL}. \\

\appendix
\section{Extension to the two-dimensional problems}
The two-dimensional Stokes flow representations in a half-space can be
derived similarly.  They lead to identical Papkovich-Neuber
corrections with corresponding charge, dipole, and quadrupole
potentials replaced by their two-dimensional equivalents. Similar to
the three-dimensional case, for any ${\xx} \in \reals^2$, the
velocity field ${\uvec}^W$ satisfying the no-slip boundary condition in
a half-plane is composed of three terms:
\begin{equation}
 \uvec^W(\xx) =  \uvec^{A}(\xx) -  \uvec^{B}(\xx) -  \uvec^{C}(\xx),
\end{equation}
where the first term is the velocity field induced by the free-space
Green's function, the second term is the reflected image about the
plane wall annihilating the tangential velocity component, and the
third term is a Papkovich-Neuber correction term in the following
form:
\begin{equation} \uvec^C(\xx) = x_2 \nabla \phi (\xx) - \left[ 
\begin{array}{c} 0 \\ \phi(\xx) \end{array} \right], 
\quad p(\xx) = 2 \frac{\partial \phi(\xx)}{\partial x_2}. 
\label{PN2d}\end{equation}
The correction potentials for various fundamental solutions are
\begin{equation}
\phi(\xx) = f^{I}_2 \, G^{S}(\xx,\yy^{I}) + y_2 \, G^{D}[{\bf f}^{I}](\xx,\yy^{I}),
\end{equation}
for the two-dimensional Stokeslet, and
\begin{equation}
  \phi^T(\xx) = 2 (\nub^{I} \cdot {\bf g}^{I})  G^{D}[{\bf h}](\xx,\yy^{I}) + 2 y_2 G^{Q}[\nub^{I}, {\bf g}^{I}](\xx,\yy^{I}), 
\end{equation}
\begin{equation}
 \phi^R(\xx) = - 2 \nu^{I}_2 G^{D}[{\bf g}^{I}](\xx,\yy^{I}) + 
2 {\bf g}^{I}_2 G^{D}[\nub^{I}](\xx,\yy^{I}),
\end{equation}
\begin{equation}
 \phi^D(\xx) =  - 2 \nu^{I}_2 G^{D}[{\bf g}^{I}](\xx,\yy^{I}) 
+ 2 g^{I}_2 G^{D}[\nub^{I}](\xx,\yy^{I}) 
+ 2 y_2 G^{Q}[\nub^{I}, {\bf g}^{I}](\xx,\yy^{I}),
\end{equation}
for the two-dimensional stresslet, rotlet, and Stokes doublet, respectively,
where the images of the source, single force, double
force, and double force orientation vectors are
\begin{equation}
  \yy^{I} = (y_1, -y_2), \quad {\bf f}^{I} = (f_1, -f_2), 
  \quad {\bf g}^{I} = (g_1, -g_2), 
  \quad \nub^{I} = (\nu_1,-\nu_2),
\end{equation}
with the orientation vector ${\bf h}=(0,1)$, and the free-space Laplace Green's
functions in two dimensions are given by
\begin{equation} G^{S} (\xx,\yy) = - \frac{1}{2\pi} \log |\xx-\yy|, \quad
 G^{D}[\nub] (\xx,\yy) = 
 \sum_{i=1}^2 \nu_i \frac{\partial}{\partial y_i} G^{S} (\xx,\yy), \end{equation}
\begin{equation} 
\quad\text{and}\quad G^{Q}[\nub,\kappab] (\xx,\yy) = 
\sum_{i=1}^2 \sum_{j=1}^2 \nu_i \kappa_j \frac{\partial^2}{\partial y_i\partial 
y_j} G^{S} (\xx,\yy).
\end{equation}
\section{Extension to linear elasticity kernels}
The single layer kernel for linear isotropic elasticity in $R^3$ is given by
Kelvin's solution \cite{mindlin}
\begin{equation} U_{ij}(\xx,\yy)= \frac{1}{8\pi\mu} \left[ (2-\alpha) 
\frac{\delta_{ij}}{\xymod}+ \alpha \frac{(x_i-y_i)(x_j-y_j)}{\xymod^3} \right],
\quad i,j=1,2,3,
\end{equation}
where $\lambda$, $\mu$ are Lame's parameters and
$\alpha=(\lambda+\mu)/(\lambda+2\mu)$. It is easy to see that the displacement
field
\begin{equation}
  u_i(\xx) = \sumj S_{ij} (\xx,\yy) f_j - 
       \sumj S_{ij} (\xx,\yy^{I}) f^{I}_j - u_i^C(\xx)
\end{equation}
satisfies the no-displacement boundary condition $\uvec(\xx)=0$ at $x_3=0$,
where the Papkovich-Neuber correction is
\begin{equation}
  u^C_i(\xx) = \alpha
  x_3 \frac{\partial}{\partial x_i} 
  \phi(\xx) -(2-\alpha) \delta_{i3} \phi(\xx), 
\end{equation}
\begin{equation}
\phi(\xx) = \frac{f^{I}_3}{\mu} G^{S}(\xx,\yy^{I}) + 
\frac{y_3}{\mu} G^{D}[{\bf f}^{I}
](\xx,\yy^{I}).
\end{equation}

\vspace{.2in}

\bibliographystyle{plain}
\bibliography{stokeshf}

\end{document}